\newlength\fheight 
\newlength\fwidth 
\pgfplotsset{compat=newest} 
\pgfplotsset{plot coordinates/math parser=false}
\newtheorem{thm}{Theorem}
\newtheorem{assum}{Assumption}
\newtheorem{rmk}{Remark}
\newtheorem{lem}{Lemma}
\newtheorem{defn}{Definition}
\let\NAT@parse\undefined
\pgfplotsset{plot coordinates/math parser=false}
\newcommand{\Rset}{\mathbb{R}}
\newcommand{\BB}{{\mathcal{B}}}
\newcommand{\CC}{{\mathcal{C}}}
\newcommand{\DD}{{\mathcal{D}}}
\newcommand{\GG}{{\mathcal{G}}}
\newcommand{\HH}{{\mathcal{H}}}
\newcommand{\II}{{\mathcal{I}}}
\newcommand{\JJ}{{\mathcal{J}}}
\newcommand{\LL}{{\mathcal{L}}}
\newcommand{\NN}{{\mathcal{N}}}
\newcommand{\RR}{{\mathcal{R}}}
\newcommand{\ZZ}{{\mathcal{Z}}}
\newcommand{\mbf}[1]{\mathbf{#1}}                  
\newcommand{\subss}[2]{{#1}_{[#2]}}
\title{\LARGE \bf
On Existence of Equilibria, Voltage Balancing, and Current Sharing in Consensus-Based DC Microgrids*
}
\author{Pulkit Nahata and Giancarlo Ferrari-Trecate$^{1}$
	\thanks{*This work has received support from the Swiss National Science Foundation under the COFLEX project (grant number 200021{\_}169906).}
	\thanks{$^{1}$Pulkit Nahata, and Giancarlo Ferrari-Trecate are with the Dependable Control and Decision group (DECODE) of the  Automatic Control Laboratory, \'Ecole Polytechnique F\'ed\'erale de Lausanne (EPFL), Switzerland. Email addresses: {\tt\small \{pulkit.nahata, giancarlo.ferraritrecate\}@epfl.ch}}%
}
\begin{document}

\maketitle
\thispagestyle{empty}
\pagestyle{empty}

\begin{abstract}

This work presents new secondary regulators for current sharing and voltage balancing in DC microgrids, composed of distributed generation units, dynamic RLC lines, and nonlinear ZIP (constant impedance, constant current, and constant power) loads. The proposed controllers sit atop a primary voltage control layer, and utilize information exchanged over a communication network to take necessary control actions. We deduce sufficient conditions for the existence and uniqueness of an equilibrium point, and show that the desired objectives are attained in steady state.  Our control design only requires the knowledge of local parameters of the generation units, facilitating plug-and-play operations. We provide a voltage stability analysis, and illustrate the performance and robustness of our designs via simulations. All results hold for arbitrary, albeit connected, microgrid and communication network topologies.

\end{abstract}

\section{INTRODUCTION}
\label{sec:introduction}
Microgrids (mGs) are electric networks comprising distributed generation units (DGUs), storage devices, and loads. Apart from their manifold advantages like integration of renewables, enhanced power quality, reduced transmission losses, capability to operate in grid-connected and islanded modes, they are compatible with both AC and  DC operating standards \cite{Bhaskara}. In particular, DC microgrids (DCmGs), have gained traction in recent times. Their rising popularity can be attributed to development of efficient converters, natural interface with renewable energy sources (for instance PV modules) and batteries, and availability of electronic loads (various appliances, LEDs, electric vehicles, computers etc) inherently DC in nature \cite{Dragicevic1, Meng}. 

In islanded DCmGs, voltage stability is crucial, for without it voltages may breach a critical level and damage connected loads \cite{Meng}. Thus, a primary voltage control layer is often employed to track desired voltage references at points of coupling (PCs), whereby DGUs are connected to the DCmG. To this aim, several approaches, for example, based on droop control \cite{Dragicevic1,Shafiee2014} and plug-and-play control \cite{Nahata, Tucci2016independent}, have been proposed in the literature. Besides voltage stability, another desirable objective is to ensure current sharing, that is, DGUs must share mG loads in accordance with their current ratings. Indeed, in its absence, unregulated currents may overload generators and eventually lead to an mG failure. An additional goal of voltage balancing, requiring boundedness of weighted sum of PC voltages, is often sought to complement current sharing \cite{Tucci2018}. Primary controllers, however, are blind voltage reference emulators and are unable to attain the aforementioned objectives by themselves. Higher-level secondary control architectures are therefore necessary to coordinate the voltage references provided to the primary layer. 

Consensus-based secondary regulators guaranteeing current sharing and voltage balancing have been the subject of many recent contributions. Centralized design approaches are proposed in \cite{Nasirian, shafiee2014distributed_b}, but are prohibitive for large-scale mGs as they require particulars of mG topology, lines, loads, and DGUs. With the aim of overcoming this issue, scalable design procedures \cite{Nahata, Tucci2016independent} have been proposed, which enable the synthesis of decentralized controllers and plug -in/-out of DGUs on the fly without spoiling the overall stability of the network. Distributed consensus-based controllers for DCmGs with generic topologies, discussed in \cite{Tucci2018, Zhao}, remedy the limitations of centralized design schemes, but presume static lines and abstract DGUs as ideal voltage generators or first-order systems. Efforts to take into account DGU dynamics and RL lines have been made in \cite{Trip, Cucuzzella2018robust}. In \cite{Cucuzzella2018robust}, a robust distributed control algorithm is proposed considering both objectives; however, a suitable initialization of the controller is needed. The resistance of the DGU filter is neglected in \cite{Trip} and hence, voltage balancing cannot be guaranteed in steady state. In addition, the above-mentioned contributions \cite{Tucci2018,Zhao,Trip,Cucuzzella2018robust} are limited to linear loads. A power consensus algorithm with ZIP loads is studied in \cite{DePersis} under simplified DCmG dynamics and assumptions on existence of a suitable steady state.

\subsection{Paper Contributions}

In this paper, we build upon our previous theoretical contributions on primary voltage control \cite{Nahata}, and introduce a distributed secondary control layer for proportional current sharing and weighted voltage balancing in DCmGs consisting of DGUs, loads, and interconnecting power lines. 

The main technical novelties of this paper are fourfold. First, this work does away with the modeling limitations of several existing contributions. In addition to dynamic RLC lines and nonlinear ZIP loads, we consider DGUs interfaced with DC-DC Buck converters  and incorporate complete converter dynamics along with filter resistances. Second, we propose a new consensus-based secondary control scheme relying on the exchange of variables (obtained from DGU filter currents) with nearest communication neighbors. These secondary regulators appropriately modify primary voltage references to attain the desired goals. In spite of their distributed structure, the control design is completely decentralized allowing for plug-and-play operations. Third, we throughly investigate the steady-state behavior of the DCmG under secondary control, and deduce sufficient conditions on the existence and uniqueness of an equilibrium point meeting secondary goals. {Such an analysis is not trivial due to the nonlinearities introduced by the ZIP loads and entails finding solutions of the DC power-flow equations under consensus constraints on a hyperplane.} To the best of our knowledge, this has not been addressed in the literature \cite{Bolognani, simpson2016voltage} before.  Finally, we present conditions on the controller gains and power consumption of P loads guaranteeing voltage stability of the closed-loop DCmG, and show that stability is independent of DCmG and communication topologies.

The remainder of Section \ref{sec:introduction} introduces relevant preliminaries and notation. Section \ref{sec:model} recaps the DCmG model and primary voltage control. Section \ref{sec:secondary} houses our main contributions. It presents consensus-based secondary controllers, and details the steady-state behavior and stability of the closed-loop DCmG in the presence of ZIP loads. Simulations validating theoretical results are provided in Section \ref{sec:simulations}.  Finally, conclusions are drawn in Section \ref{sec:conclusions}. 

\subsection{Preliminaries and notation}
\textit{Sets, vectors, and functions:} We let $\mathbb{R}$ (resp. $\mathbb{R}_{>0}$) denote the set of real (resp. strictly positive real) numbers. For a finite set $\mathcal{V}$, let $|\mathcal{V}|$ denote its cardinality. Given $ x \in \mathbb{R}^{n}$, $[x] \in \mathbb{R}^{n \times n}$ is the associated diagonal matrix with $x$ on the diagonal. The inequality $x\leq y$ is component-wise, that is, $x_i\leq y_i,~\forall i\in 1,...,n$.  Throughout, $\textbf{1}_n$ and $\textbf{0}_n$ are the $n$-dimensional vectors of unit and zero entries, and $\mbf{0}$ is a matrix of all zeros of appropriate dimensions. 
The average of a vector $v\in\mathbb{R}^n$ is $\langle v\rangle=\frac{1}{n}\sum_{i=1}^n v_i$. We denote with $H^1$ the subspace composed by all vectors with zero average  i.e. $H^1 = \{v\in\mathbb{R}^n:\langle v\rangle = 0\}$. For the matrix $A\in\mathbb{R}^{m\times n}$, $A^\dagger\in \mathbb{R}^{n \times m}$ denotes its pseudo inverse whereas its range and null spaces are indicated by $\mathcal{R}(A)$ and $\mathcal{N}(A)$, respectively. 

\textit{Algebraic graph theory}: We denote by $\mathcal{G}(\mathcal{V},\mathcal{E},{W})$ an undirected graph, where $\mathcal{V}$ is the node set and $\mathcal{E}=(\mathcal{V}\times\mathcal{V})$ is the edge set. If a number $l \in \{1,...,|\mathcal{E}|\}$ and an arbitrary direction are assigned to each edge, the incidence matrix $B \in \mathbb{R}^{|\mathcal{V}|\times|\mathcal{E}|}$ has non-zero components: $B_{il} = 1$ if node $i$ is the sink node of edge \textit{l}, and $B_{il} =-1$ if node $j$ is the source node of edge $l$. The \textit{Kirchoff's Current Law} (KCL) can be represented as $x = B\xi$, where $x \in \mathbb{R}^{|\mathcal{V}|}$ and $\xi \in \mathbb{R}^{|\mathcal{E}|}$ respectively represent the nodal injections and edge flows. Assume that the edge $l \in \{1,...,|\mathcal{E}|\}$ is oriented from $i$ to $j$, then for any vector $V \in \mathbb{R}^{|\mathcal{V}|}$, $(B^TV)_l=V_i-V_j$. The Laplacian matrix $\mathcal{L}$ of graph $\GG$ is $\LL=BWB^T$. If the graph is connected, then $\RR(\LL)=H^1$.

\section{DCmG model and primary voltage control}
\label{sec:model}
In this section, we  start by reviewing our DCmG model \cite{Nahata, Tucci2016independent} comprising multiple DGUs interconnected with each other via power lines, and recall the concepts of primary voltage control.  

\textit{DCmG Model:} {The DCmG is modeled as an undirected connected graph $\mathcal{G}_{e}=(\mathcal{D},\mathcal{E})$, where $\DD=\{1,\dots,N\}$ is the node set and $\mathcal{E}\subseteq\mathcal{D}\times\mathcal{D}$ the edge set. Each node of the DCmG is connected to a DGU and a load, and forms the $i^{th}$ PC.  The interconnecting power lines are represented by the edges of $\mathcal{G}_e$. On assigning a number to each line, one can equivalently express $\mathcal{E} = \{1,\dots,M\}$ with $M$ denoting the total number of lines. Note that edge directions are assigned arbitrarily, and provide a reference system for positive currents.  We refer the reader to Figure \ref{fig:powernework} for a representative diagram of the DCmG.}      

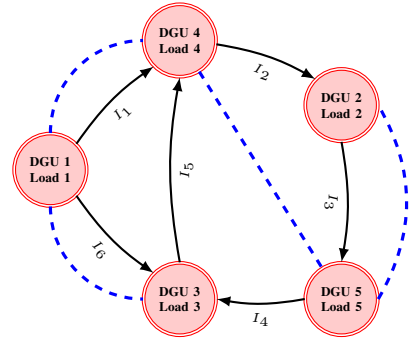
\begin{figure}[h!]
	\centering
	\ctikzset{bipoles/length=1.2cm}
	\tikzstyle{every node}=[font=\tiny]
	\vspace{-0.1cm}
	\begin{circuitikz}[american currents, scale=0.43]
		\tikzstyle{DGU} = [circle, draw, double, align=center, fill=red!20, draw=red]
		
		\draw (-1,1) node(D1) [DGU]  {\textbf{DGU 1} \\ \textbf{Load 1}};
		\draw (8,3) node(D2) [DGU]  {\textbf{DGU 2}\\  \textbf{Load 2}};
		\draw (3,-3) node(D3) [DGU]  {\textbf{DGU 3} \\ \textbf{Load 3}};
		\draw (3,5)  node(D4) [DGU]  {\textbf{DGU 4} \\ \textbf{Load 4}};
		\draw (8,-3)node(D5) [DGU]  {\textbf{DGU 5} \\ \textbf{Load 5}};
		
		\tikzstyle{every node}=[font=\tiny]		
		\path [-latex,thick] (D1.north east) edge [ bend left=10] node[below, sloped] {$I_1$} (D4.south west);	
		\path [-latex,thick] (D4) edge [ bend left=10] node[below, sloped] {$I_2$} (D2.north west);	
		\path [-latex,thick] (D2.south) edge [ bend left=10] node[below, sloped] {$I_3$} (D5.north);	
		\path [-latex,thick] (D5.west) edge [ bend left=10] node[below, sloped] {$I_4$} (D3.east);	
		\path [-latex,thick] (D3.north) edge [ bend left=10] node[below, sloped] {$I_5$} (D4.south);	
		\path [-latex,thick] (D1.south east) edge [ bend right=10] node[below, sloped] {$I_6$} (D3.north west);	
		\path[dashed, very thick, blue] (D1) edge [ bend left=45] (D4);	
		\path[dashed, very thick, blue] (D1) edge [ bend right=45] (D3);	
		\path[dashed, very thick, blue] (D4) edge  (D5);	
		\path[dashed, very thick, blue] (D5.east) [ bend right=30] edge  (D2.east);	
		
	\end{circuitikz}
	\caption{A representative diagram of the DCmG  with the communication network appearing in dashed blue.}
	\label{fig:powernework}
\end{figure}
\textit{Dynamic model of a power line:} The power lines are modeled after the $\pi$-equivalent model of transmission lines \cite{Kundur}. It is assumed that the line capacitances are lumped with the DGU filter capacitance (capacitor $C_{ti}$ in Figure \ref{fig:ctrl_complete}). Therefore, as shown in Figure \ref{fig:ctrl_complete}, the power line $l$ is modeled as a $RL$ circuit with resistance $R_l>0$ and inductance $L_l>0$. By applying Kirchoff's voltage law (KVL) on the $l^{th}$ line, one obtains
\begin{equation}
\begin{small}
\label{eq:powerline}
\subss{{\Sigma}}{l}^{Line}:\left\{\dfrac{dI_l}{dt} = - \dfrac{R_l}{L_l}I_l + \dfrac{1}{L_l}\sum_{i\in\NN_l}B_{il}V_i\right., \\
\end{small}
\end{equation} 
where the variables $V_i$ and $I_l$ represent the voltage at $PC_i$ and the current flowing through the $l^{th}$ line, respectively. 

\textit{Dynamic model of a DGU:} The DGU comprises a DC voltage source (usually generated by a renewable resource),  a Buck converter, and a series $RLC$ filter.  The $i^{th}$  DGU feeds a local load at $PCC_i$ and is connected to other DGUs through power lines.  A schematic electric diagram of the $i^{th}$ DGU along with load, connecting line(s), loads, and local PnP voltage controller is represented in Figure \ref{fig:ctrl_complete}. On applying KCL and KVL on the DGU side at $PC_i$, we obtain
\begin{small}
\begin{equation}
\label{eq:DGUdynamics}
\subss{{\Sigma}}{i}^{DGU}:
\left\{
\begin{aligned}
C_{ti}\dfrac{dV_{i}}{dt} &= I_{ti}-{I_{Li}(V_i)} -\sum_{l \in \mathcal{N}_i}B_{il}I_{l}\\
L_{ti}\dfrac{dI_{ti}}{dt} &= -V_{i}-{R_{ti}}I_{ti}+V_{ti}\\
\end{aligned}
\right. ,
\end{equation}
\end{small}
where  $V_{ti}$ is the command to the Buck converter and $I_{ti}$ is the filter (generator) current. The terms $R_{ti} \in \mathbb{R}_{>0}$, $L_{ti} \in \mathbb{R}_{>0}$, and $C_{ti} \in \mathbb{R}_{>0}$ are the internal resistance, capacitance (lumped with the line capacitances), and inductance of the DGU converter. Each of these DGUs is equipped with local voltage regulators, which forms the \textit{primary control layer}. The main objective these controllers is to ensure that the voltage at each DGU's PC tracks a reference voltage $V_{ref,i}$ (modified by the \textit{secondary controller}; see Section \ref{sec:secondary} for more details).  For this purpose, as in \cite{Nahata, Tucci2016independent}, we augment each DGU with a multivariable PI regulator
	\begin{subequations}
	\begin{align}
	\dot{v}_i = \subss{e}{i} &= {V_{ref,i}}-{V}_i\label{eq:intdynamics},\\
	\subss{\CC}{i}:~V_{ti} &=K_{[i]}\subss{\hat{x}}{i},\label{eq:controldec},
	\end{align}
	\end{subequations}
where $\subss{\hat{x}}{i}=\left[V_i\text{ }I_{ti}\text{ }v_i\right]^T\in\Rset^{3}$ is the state of augmented DGU and $K_{[i]}=\left[k_{1,i}\text{ }k_{2,i}\text{ }k_{3,i}\right]\in\Rset^{1\times3}$ is the feedback gain. From \eqref{eq:DGUdynamics}-\eqref{eq:controldec}, the closed-loop DGU model is obtained as 
\begin{small}
\begin{equation}
\label{eq:DGUdynamicsupdated}
\begin{split}
\subss{\hat{\Sigma}}{i}^{DGU}:
&\left\{
\begin{aligned}
\dfrac{dV_{i}}{dt} &= \dfrac{1}{C_{ti}}I_{ti}-\dfrac{1}{C_{ti}}{I_{Li}(V_i)}- \dfrac{1}{C_{ti}}{\sum_{l \in \mathcal{N}_i}B_{il}I_{l}}\\
\dfrac{dI_{ti}}{dt} &= {\alpha_i}V_{i} + {\beta_i}I_{ti}+ {\gamma_i}v_{   i}\\
\dfrac{dv_i}{dt} &= -{V}_i+{V_{ref,i}}
\end{aligned}\right.\\
\end{split},
\end{equation}
\end{small}
where
\begin{small}
\begin{equation}
\label{eq:abg}
\alpha_i=\frac{(k_{1,i}-1)}{L_{ti}},~\beta_i=\frac{(k_{2,i}-R_{ti})}{L_{ti}},~\gamma_i=\frac{k_{3,i}}{L_{ti}}.
\end{equation} 
\end{small}
We note that the control architecture is decentralized since the computation of $V_{ti}$ uniquely requires the state of $\subss{\hat{\Sigma}}{i}^{DGU}$. 

\textit{Load model:}  In this work, we consider the standard ZIP model \cite{Nahata}. The parallel combination of these three loads is given as
\begin{equation}
\label{eq:loaddynamics}
I_{Li}(V_i)=\underbrace{Y_{Li}V_i}_{Z}+\underbrace{\bar{I}_{Li}}_{I} +\underbrace{V_i^{-1}P^*_{Li}}_{P}.
\end{equation}%
\begin{assum}
	\label{ass:voltage}
	The reference signals $V_{ref,i}$ and PC voltages $V_i$ are strictly positive for all $t \geq 0$.
\end{assum}
We remark that Assumption \ref{ass:voltage} is not a limitation, and rather reflects a common constraint in microgrid operation.  Indeed, negative PCC voltages reverse the role of loads and make them power generators. 
\section{Secondary control in DCmGs}
\label{sec:secondary}
\subsection{Problem formulation}
The primary control layer is designed to track a suitable reference voltage $V_{ref,i}$ at the $PC_i$. As such, they do not ensure current sharing and voltage balancing, defined as follows. 
\begin{defn} \textbf{(Current sharing \cite{Tucci2018, Zhao}).}
\label{defn:cs}
The load is said to be shared proportionally among DGUs if
\begin{equation}
\label{eq:cs_defn}
\frac{I_{ti}}{I_{ti}^s} = \frac{I_{tj}}{I_{tj}^s}\hspace{4mm}\text{for all }i,j\in \mathcal{V},
\end{equation}
where $I_{ti}^s>0$ is the rated current of $DGU_i$.
\end{defn}
Currents sharing ensures proportional sharing of loads amongst multiple DGUs.  This avoids situations of DGU overloading and prevents harm to the converter modules. As shown in the subsequent sections, in order to attain current sharing, the steady state voltages are not equal to $V_{ref,i}$. It is, however, desirable that PC voltages remain close to the nominal reference voltages for normal operation of the DCmG. To this aim, we state the objective of weighted voltage balancing in the following definition.
\begin{defn}\textbf{(Weighted voltage balancing \cite{Cucuzzella2018robust}).}
\label{defn:vb} The voltages are said to be balanced in the steady state if 
\begin{equation}
\label{eq:vb_defn}
\langle[I^s_{t}]{V}\rangle =\langle[I^s_{t}]V_{ref}\rangle .
\end{equation}
\end{defn}
\vspace{.2cm}
Voltage balancing implies that the weighted sum of PC voltages is equal to the the weighted sum of voltage references, ensuring boundedness of DCmG voltages. As noticed in \cite{Zhao}, in its absence, the PC voltages may experience drifts and increase monotonically despite the filter currents' being shared proportionally.
\begin{figure}[!h]
	\centering
	\ctikzset{bipoles/length=0.6cm}
	\tikzstyle{every node}=[font=\tiny]
	\vspace{0.1cm}
	\begin{circuitikz}[scale=0.48]
		\draw (1,1)  to [battery, o-o](1,4)
		to [short](1.5,4)
		to [short](1.5,4.5)
		to [short](3.5,4.5)
		to [short](3.5,0.5)
		to [short](1.5,0.5)
		to [short](1.5,4)
		to [short](1.5,1)
		to [short](1,1);
		\node at (2.5,2.5){ \footnotesize \textbf{Buck $i$}};
		\draw[-latex] (4,1.25) -- (4,3.75)node[midway,right]{$V_{ti}$};
		\draw (3.5,4) to [short](4,4)
		to [short](4.5,4)
		to [R=$R_{ti}$] (6,4)
		to [L=$L_{ti}$] (7.5,4)
		to [short, i=$\textcolor{green}{I _{ti}}$, -] (8.5,4)
		to [short](9,4) 
		to [C, l=$C_{ti}$, -] (9,1)
		to [short](4,1)
		to [short](3.5,1);
		\draw (12.3,4)  to [R=$R_{l}$] (14.5,4) to [short, i=${I _{l}}$](15,4)
		to [L=$L_{l}$] (16.5,4)
		to [short, -o] (17,4) node[anchor=north,above]{$V_j$};
		\draw (8.5,4) to (11,4) 
		to [ I ] (11 ,1)
		to [short] (9,1)
		to [short, -o] (17,1); 
		\draw (11,4) to [short](11.5,4);
		\draw (11,4) node[anchor=north, above]{$\textcolor{red}{V_i}$}  to [short, i_=$I_{Li}(V_i)$](11,2.9);
		\draw (11,4) to [short, i_=$I^*_{i}$](12.5,4);
		\node at (11,4.6)[anchor=north, above]{$PC_i$} ;
		\draw (11,4) to (12.3,4); 
		\draw[black, dashed] (.5,.25) -- (12.45,.25) -- (12.45,5.5) -- (.5,5.5)node[sloped, midway, above]{\footnotesize{ \textbf{DGU and Load $i$ }}}  -- (.5,.25);
		\draw[black, dashed] (12.7,.25) -- (16.5,.25) -- (16.5,5.5) -- (12.7,5.5)node[sloped, midway, above]{\footnotesize{ \textbf{Power line $l$}}}  -- (12.7,.25);
		\draw[red,o-] (10.9,4.15) -- (11.7,3) to (11.7,-1.5);
		\draw[red,latex-](8,-1.5)-- (9,-1.5) --  (10,-0.5)-- (11.7,-0.5);
		\draw[green,o-latex] (8.5,4.15) to (8.5,1.75) -- (8.5,-1) to (8,-1);
		\draw (10,-2) node(a) [black, draw,fill=white!20] {$\normalsize{\int}$};
		\draw[-latex] (a.west) to (8,-2);
		\draw (11.7,-2) node(b)[ circle, draw=black, minimum size=12pt, fill=lightgray!20]{};
		\draw[red, -latex] (11.7,1.75)  -- (b.north) node[pos=0.9, left]{\textcolor{black}{\normalsize{-}}};
		\draw[latex-] (a.east) -- (b.west);
		\draw[-latex] (12.8,-2)  -- (b.east) node[pos=0.7,above]{{+}} node[pos=0.25,right]{$V_{ref,i}$};
		\draw[fill=lightgray] (8,-2.25) -- (8,-0.75) -- (6.5,-1.5) -- (8,-2.25);
		\node at (7.5,-1.5) {$K_i$};
		\draw[fill=lightgray, draw=blue] (8,-4.25) -- (8,-2.75) -- (7,-3.5) -- (8,-4.25);
		\node at (7.7,-3.5) {\textcolor{blue}{$k_{4,i}$}};
		
		\draw (5.5,-1.5) node(c)[ circle, draw=black, minimum size=12pt, fill=lightgray!20]{};
		\draw[-latex] (6.5,-1.5) -- (c) node[pos=0.7,above]{{+}};
		\draw[-latex] (c.north)-- (5.5,1.5) -- (5.5,2.5) -- (5,2.5);
		\draw[-latex, blue] (7,-3.5) -- (5.5,-3.5) --(c)node[pos=0.8,left]{\textcolor{blue}{\normalsize{-}}};
		\draw[black, thick] (-0.5,-4.5) -- (17.5,-4.5) -- (17.5,6.5) -- (-0.5,6.5)node[sloped, midway, above]{\footnotesize{ \textbf{Primary Control}}}  -- (-0.5,-4.5);			
		
		\draw[black, thick] (-0.5,-5) -- (17.5,-5) -- (17.5,-10.5) --node[sloped, midway, below]{\footnotesize{ \textbf{Secondary Control}}} (-0.5,-10.5)  -- (-0.5,-5);

		\draw[blue, -latex] (11.5,-7)  -- (14.5,-7)   -- node[sloped, midway,above]{$\omega_i$} (14.5,-3.5) -- (11.7,-3.5) -- (b.south) node[pos=0.8,left]{\textcolor{black}{\normalsize{-}}};	
		\draw[blue, -latex] (11.7,-3.5)-- (8,-3.5);	
		
		\node[cloud, cloud puffs=16.2, cloud ignores aspect, minimum width=3cm, minimum height=1cm, opacity=0.75, align=center, fill=gray!20, thick, draw=black,text=black] (Gcloud) at (8.5, -7) {$\dot {\Omega} =\mathcal{L}_c[I^s_{t}]^{-1}I_t$ \\ $\omega=[I^s_{t}]^{-1}\mathcal{L}_c\Omega$};
		
		\tikzstyle{DGU} = [circle, draw, double, fill=red!20, draw=red]
		
		\draw (1,-6) node(D1) [DGU]  {\tiny $i$};
		\draw (1.25,-9) node(D3) [DGU]  {\tiny $j$};
		\draw (7,-9.25) node(D2) [DGU]  {\tiny $k$};
		\draw (16.5,-9.75) node(D4) [DGU]  {\tiny $l$};
		\draw (11.5,-9.8) node(D5) [DGU]  {\tiny $m$};
		
	\draw[darkgray,very thick, dotted] (D1) to  (Gcloud);
	\draw[darkgray,very thick, dotted] (D2) to  (Gcloud);
	\draw[darkgray,very thick, dotted] (D3) to  (Gcloud);
	\draw[darkgray,very thick, dotted] (D4) to  (Gcloud);
	\draw[darkgray,very thick, dotted] (D5) to  (Gcloud);
	\end{circuitikz}
	\caption{Schematic diagram showing primary and secondary control layers of the DCmG. Note that the topology of the communication network is not shown. }
	\label{fig:ctrl_complete}
	\vspace{-.3cm}
\end{figure}
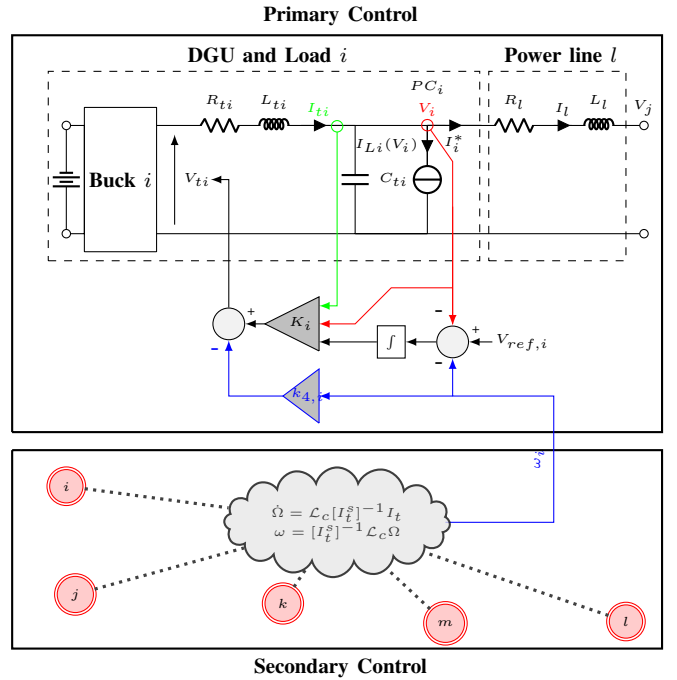	
\subsection{Consensus-based secondary control}
In order to achieve the previously stated objectives, we use a consensus-based secondary control layer. Consensus filters are commonly employed for achieving
global information sharing or coordination through distributed computations \cite{Bullo}. In our case, we propose the following 
consensus scheme 
\begin{equation}
\label{eq:basic_consensus}
\dot {\Omega}_i =\sum\limits_{j=1, j\neq i} ^{N}a_{ij}\left(\frac{I_{ti}}{I_{ti}^s}-\frac{I_{tj}}{I_{tj}^s}\right),
\end{equation}
where $a_{ij}>0$ if DGUs $i$ and $j$ are connected by a communication link ($a_{ij}=0$, otherwise). The corresponding \textit{communication graph} (see Figure \ref{fig:powernework}), assumed to be undirected and connected, is $\mathcal{G}_c=(\mathcal{D}, \mathcal{E}_c, W_c)$ where $(i,j)\in\mathcal{E}_{c}\Longleftrightarrow a_{ij}> 0$ and $W_c =\text{diag}\{a_{ij}\}$. Note that the topologies of $\mathcal{G}_c$ and $\mathcal{G}_{e}$ can be completely different. As shown in Figure \ref{fig:ctrl_complete}, the consensus variable 
\begin{equation}
\label{eq:omega}
\omega_i=\frac{1}{I_{ti}^s}\sum\limits_{j=1, j\neq i} ^{N}a_{ij}\left({\Omega_i}-{\Omega_j}\right)
\end{equation}
modifies the primary voltage controllers \eqref{eq:intdynamics} and \eqref{eq:controldec} as follows
\begin{subequations}
	\begin{align}
	\dot{v}_i &= {V_{ref,i}}-{V}_i-\omega_i\label{eq:intdynamicsconsensus}\\
V_{ti}(t)&=K_{[i]}\subss{\hat{x}}{i}-k_{4,i}\omega_i,\label{eq:controldecconsensus}
	\end{align}
\end{subequations}
where $k_{4,i} \in \mathbb{R}$. Consequently, using equations \eqref{eq:intdynamicsconsensus} and \eqref{eq:controldecconsensus}, one obtains the modified DGU dynamics as 
\begin{small}
\begin{equation}
\label{eq:DGUconsensus}
\begin{split}
\subss{\hat{\Sigma}}{i}^{DGU}:
&\left\{
\begin{aligned}
\dfrac{dV_{i}}{dt} &= \dfrac{1}{C_{ti}}I_{ti}-\dfrac{1}{C_{ti}}{I_{Li}(V_i)}- \dfrac{1}{C_{ti}}{\sum_{l \in \mathcal{N}_i}B_{il}I_{l}}\\
\dfrac{dI_{ti}}{dt} &= {\alpha_i}V_{i} + {\beta_i}I_{ti}+ {\gamma_i}v_{   i}+\delta_i\omega_i\\
\dfrac{dv_i}{dt} &= -{V}_i+{V_{ref,i}}-\omega_i
\end{aligned}\right.\\
\end{split},
\end{equation}
\end{small}
with
\begin{equation}
\label{eq:delta}
\delta_i=\dfrac{k_{4,i}}{L_{ti}}.
\end{equation}%
The complete dynamics of the DCmG under primary and secondary control are given by \eqref{eq:powerline} along with \eqref{eq:basic_consensus}-\eqref{eq:delta}. These equations can compactly be rewritten as
\begin{equation}
\label{eq:globalstatespace}
\dot{X}=\mathcal{A}X+\BB(V),
\end{equation}
where $X= \begin{bmatrix}
{V}^T
&{I_t}^T
&{v}^T
&{I}^T
&{\Omega}^T
\end{bmatrix}^T\in \mathbb{R}^{4N+M}, $
\begin{align*}
\begin{small}
\mathcal{A}=\underbrace{\begin{bmatrix}
	-C_t^{-1}Y_{L} &C_t^{-1} &\textbf{0} &-C_t^{-1}B &\textbf{0}\\
	[\alpha] &[\beta] &[\gamma] &\textbf{0} &[\delta][I^s_{t}]^{-1}\mathcal{L}_c\\
	-\textbf{I} &\textbf{0}  &\textbf{0}  &\textbf{0}  &-[I^s_{t}]^{-1}\mathcal{L}_c\\
	L^{-1}B^T &\textbf{0} &\textbf{0}  &-L^{-1}R &\textbf{0}\\
	\textbf{0} &\mathcal{L}_c[I^s_{t}]^{-1}&\textbf{0}&\textbf{0}&\textbf{0}
	\end{bmatrix}}_{\mathcal{A} \in \mathbb{R}^{(4N+M) \times (4N+M)}}
, \end{small}
\end{align*}
and
\begin{align*}
\begin{small}
\BB(V)=
\underbrace{\begin{bmatrix}
	-C_t^{-1}(\bar{I}_{L}+[V^{-1}]P^*_{L})\\
	\textbf{0}_N\\
	V_{ref}\\
	\textbf{0}_M\\
	\textbf{0}_N
	\end{bmatrix}}_{\mathcal{B}(V) \in \mathbb{R}^{(4N+M)}}.
\end{small}
\end{align*}
Note that $V \in \mathbb{R}^N$,  $V_{ref} \in \mathbb{R}^N$, $I_t \in \mathbb{R}^N$, $v \in \mathbb{R}^N$, $I \in \mathbb{R}^M$, ${P}^*_{L} \in\mathbb{R}^N$, $\bar{I}_{L} \in\mathbb{R}^N$, $\alpha \in \mathbb{R^N}$,  $\beta\in \mathbb{R^N}$, $\gamma \in \mathbb{R^N}$  are vectors of PC voltages, reference voltages, filter currents, integrator states, line currents, load powers, load currents, and parameters $\alpha_i$, $\beta_i$, $\gamma_i$, respectively. The matrices $R \in \mathbb{R}^{M \times M}_{>0}$, $L \in \mathbb{R}^{M \times M}_{>0}$, $Y_L \in \mathbb{R}^{N \times N}_{>0}$,  and $C_t \in \mathbb{R}^{N \times N}_{>0}$ are diagonal matrices collecting electrical parameters $R_l$, $L_l$, $Y_{Li}$, and $C_{ti}$, respectively. The matrix ${B} \in \mathbb{R}^{N \times M}$ is the incidence matrix of the electrical network and $\mathcal{L}_c \in \mathbb{R}^{N \times N}$  is the Laplacian matrix of the communication network.
\subsection{Analysis of Equilibria}
\label{sec:eq}
Prior to analyzing the stability of the closed-loop system \eqref{eq:globalstatespace}, it is pertinent to establish that an equilibrium exists such that both the objectives  \eqref{eq:cs_defn} and \eqref{eq:vb_defn} are jointly attained. We emphasize that, in a primary-controlled DCmG, a reference voltage $V_{ref,i}$ is directly enforced at the $i^{th}$ PC. Thus, a unique equilibrium point always exists; see \cite{Nahata}. On the contrary, once the secondary layer is activated, the voltage references are tweaked by $\omega_i$ (see \eqref{eq:intdynamicsconsensus}), governed by equations \eqref{eq:basic_consensus} and \eqref{eq:omega}. Since the presence of ZIP loads renders the DCmG dynamics nonlinear, it may occur that an equilibrium point fails to exist (see Section \ref{sec:simulations} for a simulation example). Hence, in this section, we pursue whether the closed-loop system \eqref{eq:globalstatespace} possesses an equilibrium point, and if so, under what conditions on loads, topology of electrical and communication networks, and controller gains. 
\begin{lem}
	\label{lem:equilibriumchar}
\textbf{(Steady-state behavior of the DCmG).}	Consider the DCmG dynamics \eqref{eq:globalstatespace}. The following statements hold: 
	\begin{enumerate}
		\item In steady state, the objectives \eqref{eq:cs_defn} and \eqref{eq:vb_defn} are attained;
		\item A steady-state solution $\bar{X}=[\bar{V}^T, \bar{I}_t^T, \bar{v}^T, \bar{I}^T, \bar{\Omega}^T]^T$ exists only if $[\gamma_i]$ is invertible and there exists a $\bar{V}$ concurrently satisfying the following equations
		\begin{subequations}
			\label{eq:equilibriumexistence}
			\begin{equation}
			\mathcal{L}_e\bar{V}+\mathcal{L}_{t}[I^s_{t}]^{-1}([\bar{V}^{-1}]P_L^*+\bar{I}_L+Y_L\bar{V}) =0, \label{eq:eqpowerflow}
			\end{equation}
			\begin{equation}
			\textbf{1}^T_N[I^s_{t}]\bar{V}=\textbf{1}^T_N[I^s_{t}]V_{ref},\label{eq:voltagebalancing}
			\end{equation}	
		\end{subequations}
		where  $\mathcal{L}_t=[I^s_{t}]-(\textbf{1}_N^T[I^s_{t}]\textbf{1}_N)^{-1}[I^s_{t}]\textbf{1}_N\textbf{1}_N^T[I^s_{t}]$, and $\mathcal{L}_e=BR^{-1}B^T$ is the Laplacian of the electric network. 
	\end{enumerate}
\end{lem}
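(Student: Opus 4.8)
The plan is to set every time derivative in \eqref{eq:globalstatespace} to zero, i.e. to solve $\mathcal{A}\bar X + \mathcal{B}(\bar V)=0$, and then to process the five block rows one at a time, successively eliminating $\bar I$, $\bar I_t$, $\bar v$, and $\bar\Omega$ until only a condition on $\bar V$ remains. For statement 1, I would first read off the fifth block row, $\mathcal{L}_c[I^s_t]^{-1}\bar I_t = 0$. Since $\mathcal{G}_c$ is connected, $\mathcal{N}(\mathcal{L}_c)$ is spanned by $\mathbf{1}_N$, so $[I^s_t]^{-1}\bar I_t = c\,\mathbf{1}_N$ for some $c\in\mathbb{R}$, i.e. $\bar I_{ti}/I^s_{ti}=c$ for all $i$, which is exactly \eqref{eq:cs_defn}. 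Next, the third block row gives $\bar\omega := [I^s_t]^{-1}\mathcal{L}_c\bar\Omega = V_{ref}-\bar V$; left-multiplying by $\mathbf{1}_N^T[I^s_t]$ and using $\mathbf{1}_N^T\mathcal{L}_c = 0$ yields $\mathbf{1}_N^T[I^s_t](V_{ref}-\bar V)=0$, which (after dividing by $N$) is \eqref{eq:vb_defn}, and also equals \eqref{eq:voltagebalancing}. Note that this same identity is precisely the solvability condition for $\bar\Omega$, because $\mathcal{L}_c\bar\Omega = [I^s_t](V_{ref}-\bar V)$ has a solution iff the right-hand side lies in $\mathcal{R}(\mathcal{L}_c)=H^1$.

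For statement 2 I would combine the first and fourth block rows. The fourth row gives $B^T\bar V = R\bar I$, hence $\bar I = R^{-1}B^T\bar V$ and $B\bar I = BR^{-1}B^T\bar V = \mathcal{L}_e\bar V$. Substituting into the first row gives $\bar I_t = \mathcal{L}_e\bar V + I_L(\bar V)$, where $I_L(\bar V):=Y_L\bar V + \bar I_L + [\bar V^{-1}]P_L^*$. Equating this with $\bar I_t = c\,[I^s_t]\mathbf{1}_N$ obtained above and left-multiplying by $\mathbf{1}_N^T$ (using $\mathbf{1}_N^T\mathcal{L}_e=0$) pins down $c = (\mathbf{1}_N^T[I^s_t]\mathbf{1}_N)^{-1}\mathbf{1}_N^TI_L(\bar V)$; reinserting this value and rearranging gives $\mathcal{L}_e\bar V + I_L(\bar V) - (\mathbf{1}_N^T[I^s_t]\mathbf{1}_N)^{-1}[I^s_t]\mathbf{1}_N\mathbf{1}_N^TI_L(\bar V)=0$, which is exactly \eqref{eq:eqpowerflow} once one recognizes the one-line identity $I_L(\bar V) - (\mathbf{1}_N^T[I^s_t]\mathbf{1}_N)^{-1}[I^s_t]\mathbf{1}_N\mathbf{1}_N^TI_L(\bar V) = \mathcal{L}_t[I^s_t]^{-1}I_L(\bar V)$ with $\mathcal{L}_t=[I^s_t]-(\mathbf{1}_N^T[I^s_t]\mathbf{1}_N)^{-1}[I^s_t]\mathbf{1}_N\mathbf{1}_N^T[I^s_t]$. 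Finally, the second block row reads $[\alpha]\bar V + [\beta]\bar I_t + [\gamma]\bar v + [\delta]\bar\omega = 0$, in which $\bar V$, $\bar I_t$, and $\bar\omega$ are already determined by the equations above; this can be solved for the integrator states $\bar v$ only when $[\gamma]$ is invertible, in which case $\bar v = -[\gamma]^{-1}([\alpha]\bar V + [\beta]\bar I_t + [\delta]\bar\omega)$. Collecting the necessary conditions yields statement 2.

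All the manipulations are short and linear-algebraic; the only points requiring care are the bookkeeping with the Laplacian null-space/range conditions — $\mathbf{1}_N^T\mathcal{L}_c=0$, $\mathbf{1}_N^T\mathcal{L}_e=0$, and $\mathcal{R}(\mathcal{L}_c)=H^1$ — which simultaneously force voltage balancing, guarantee the existence of $\bar\Omega$, and make the projector $\mathcal{L}_t$ emerge when eliminating the scalar $c$. I expect the step that most readers will want written out explicitly is the identity collapsing $I_L(\bar V)-\tfrac{[I^s_t]\mathbf{1}_N\mathbf{1}_N^TI_L(\bar V)}{\mathbf{1}_N^T[I^s_t]\mathbf{1}_N}$ into $\mathcal{L}_t[I^s_t]^{-1}I_L(\bar V)$, even though it is immediate from the definition of $\mathcal{L}_t$.
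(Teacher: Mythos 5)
Your proposal is correct and follows essentially the same route as the paper's proof: read off the five steady-state block rows, use $\mathcal{N}(\mathcal{L}_c)=\mathrm{span}\{\mathbf{1}_N\}$ to get current sharing, use $\mathcal{R}(\mathcal{L}_c)=H^1$ (equivalently $\mathbf{1}_N^T\mathcal{L}_c=0$) for voltage balancing and the existence of $\bar{\Omega}$, eliminate $\bar{I}$ and the scalar $c$ to obtain \eqref{eq:eqpowerflow}, and invert $[\gamma]$ to recover $\bar{v}$. The only difference is cosmetic: you make explicit the projector identity that produces $\mathcal{L}_t[I^s_t]^{-1}$, which the paper leaves implicit.
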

\begin{proof} 
	Any steady state solution of \eqref{eq:globalstatespace} satisfies
	\begin{subequations}
		\begin{align}
		-Y_{L}\bar{V} 	-\bar{I}_{L}-[\bar{V}^{-1}]P^*_{L}+{\bar{I}_t} -B\bar{I} &=0\label{eq:eq1},\\
		[\alpha]\bar{V} +[\beta] \bar{I_t}+[\gamma]	\bar{v} +[\delta][I^s_{t}]^{-1}\mathcal{L}_c\bar{\Omega}&=0\label{eq:eq2},\\
		V_{ref}-\bar{V} -[I^s_{t}]^{-1}\mathcal{L}_c\bar{\Omega}&=0\label{eq:eq3},\\
		B^T \bar{V}-R\bar{I} &=0\label{eq:eq4},\\
		\mathcal{L}_c[I^s_{t}]^{-1}	\bar{I_t}&=0.\label{eq:eq5}
		\end{align}
	\end{subequations}
	One has from \eqref{eq:eq5} that $\bar{I}_t=\epsilon [I^s_t]\textbf{1}_N$ for some $\epsilon\in\mathbb{R}$, warranting the attainment of \eqref{eq:cs_defn}. Since $\textbf{1}^T_NB=\textbf{0}_M$, \eqref{eq:eq1} implies that $\textbf{1}_N^T\bar{I}_t=\textbf{1}_N^T(Y_{L}\bar{V} +\bar{I}_{L}+[\bar{V}^{-1}]P^*_{L})$, then $\epsilon=(\textbf{1}_N^T[I^s_{t}]\textbf{1}_N)^{-1}\textbf{1}_N^T(Y_{L}\bar{V} +\bar{I}_{L}+[\bar{V}^{-1}]P^*_{L}).$ We can equivalently represent 
	\begin{equation}
	\label{eq:Itbar}
	\bar{I}_t=(\textbf{1}_N^T[I^s_{t}]\textbf{1}_N)^{-1}[I^s_{t}]\textbf{1}_N\textbf{1}_N^T(Y_{L}\bar{V} +\bar{I}_{L}+[\bar{V}^{-1}]P^*_{L}).
	\end{equation}
	Using \eqref{eq:eq4},
	\begin{equation}
	\label{eq:Ibar}
	\bar{I}=R^{-1}B^T\bar{V}.
	\end{equation}
	On substituting \eqref{eq:Itbar} and \eqref{eq:Ibar} into \eqref{eq:eq1}, one obtains \eqref{eq:eqpowerflow}. Moreover, for an $\bar{\Omega}$ to exist such that \eqref{eq:eq3} holds, $[I^s_t](V_{ref}-\bar{V})\in H^1$, which yields \eqref{eq:voltagebalancing} and guarantees \eqref{eq:vb_defn} in steady state.  If there exists a $\bar{V}$ solving \eqref{eq:equilibriumexistence}, $\bar{I}_t$ and $\bar{I}$ exist due to \eqref{eq:Itbar} and \eqref{eq:Ibar}, respectively. As \eqref{eq:voltagebalancing} holds, from \eqref{eq:eq3}, an equilibrium vector  $\bar{\Omega}=\mathcal{L}_c^\dagger[I^s_t](V_{ref}-\bar{V})+\eta\textbf{1}_N, \eta \in \mathbb{R}$ exists. Finally, on substituting $\bar{V}, \bar{I}_t, \bar{I},$ and $\bar{\Omega}$ into \eqref{eq:eq2}, one has $\bar{v}=[\gamma]^{-1}\left(([\alpha]+[\delta])\bar{V}-[\delta]V_{ref} +[\beta] \bar{I}_t\right)$.
\end{proof}
\begin{rmk}\textbf{(Solvability of \eqref{eq:equilibriumexistence}).}
	We remark that \eqref{eq:eqpowerflow} represents the DC power-flow equations when DGU currents are shared proportionally. The existence and uniqueness of solutions of the power-flow equations have been tackled in \cite{simpson2016voltage, LaBella}. As shown in what follows, the tools therein cannot be applied directly to ascertain the solvability of \eqref{eq:equilibriumexistence} as \eqref{eq:voltagebalancing} restricts the voltage solutions on a hyperplane.
\end{rmk}
To analyze the existence of a voltage solution to \eqref{eq:equilibriumexistence}, we rewrite it as
\begin{equation}
\label{eq:PFmatrix}
\tilde{\mathcal{L}}V=\tilde{I}-\tilde{\LL}_t[V^{-1}]P_L^*,
\end{equation}
where $\tilde{\mathcal{L}}=\begin{bmatrix}
\LL_p\\
\textbf{1}^T_N[I^s_{t}]
\end{bmatrix}$, $\LL_p=\mathcal{L}_e+\mathcal{L}_{t}[I^s_{t}]^{-1}Y_L$, $\tilde{I}=\begin{bmatrix}-\mathcal{L}_{t}[I^s_{t}]^{-1}\bar{I}_L\\
\textbf{1}^T_N[I^s_{t}]V_{ref}
\end{bmatrix}$, and $\tilde{\LL}_t=\begin{bmatrix}\mathcal{L}_{t}[I^s_{t}]^{-1}\\
\textbf{0}\end{bmatrix}$. 
\begin{thm}
	\label{thm:existence}
	\textbf{(Existence and uniqueness of a voltage solution).} Consider \eqref{eq:PFmatrix} along with the vector $V^*=\tilde{\LL}^\dagger\tilde{I}$. Assume that $[V^*]$ is invertible and define $P_{cri}=4[V^*]^{-1}\tilde{\LL}^\dagger\tilde{\LL}_t[V^*]^{-1}$. Assume that the network parameters and loads satisfy 
	\begin{equation}
	\label{eq:Criticalpower}
	\Delta=||P_{cri}P_L^*||_{\infty}<1,
	\end{equation}
	and define the percentage deviations $\delta_{-}\in[0,\frac{1}{2})$ and $\delta_{+}\in(\frac{1}{2},1]$ as the unique solutions of $\Delta=4\delta_{\pm}(1-\delta_{\pm})$. The following statements hold:
	\begin{enumerate}
		\item[1)] There exists a unique voltage solution $V \in \HH(\delta_{-})$ of \eqref{eq:PFmatrix}, where
		\begin{equation}
		\label{eq:existenceset}
		\HH(\delta_{-})\coloneqq \{V\in \mathbb{R}^N |  (1-\delta_{-})V^*\leq V \leq(1+\delta_{-})V^*\}.
		\end{equation}
		Moreover, there exist no solutions of \eqref{eq:PFmatrix} in the open set 
		\begin{equation}
		\label{eq:nonexistenceset}
		\II\coloneqq \{V\in \mathbb{R}^N |  ( V >(1-\delta_{+})V^* ~\text{and}~V\notin \HH(\delta_{-}) \};
		\end{equation}
		\item[2)]  For $P_L^*=0$, $V^*$ is the unique solution of \eqref{eq:PFmatrix};
		\item [3)] If $(1-\delta_{+})V^* <V_{ref}$, then, there exist no solutions of \eqref{eq:PFmatrix} in the closed set 
		\begin{equation}
		\label{eq:nonexistenceset1}
		\JJ\coloneqq \{V\in \mathbb{R}^N |  ( V \leq (1-\delta_{+})V^* \}.
		\end{equation} 
	\end{enumerate}
	\end{thm}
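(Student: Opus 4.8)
The plan is to recast \eqref{eq:PFmatrix} as a fixed-point equation $V=f(V)$, apply the Banach fixed-point theorem on the box $\HH(\delta_{-})$, and then rule out solutions in $\II$ and $\JJ$ by comparison and sign arguments. As a preliminary step I would record the structural fact that $\tilde{\mathcal{L}}$ has full column rank with $\mathcal{R}(\tilde{\mathcal{L}})=H^{1}\times\mathbb{R}$: full rank follows from connectedness of $\mathcal{G}_e$ (so $\mathcal{R}(\mathcal{L}_e)=H^{1}$) together with the structure of $\mathcal{L}_t$ and the transversality of the row $\textbf{1}_N^{T}[I^s_t]$ to $\mathcal{N}(\mathcal{L}_p)$, and since $\mathcal{L}_t$ maps into $H^{1}$ the first $N$ entries of the right-hand side of \eqref{eq:PFmatrix} lie in $H^{1}$ for every $V$, hence the right-hand side is always in $\mathcal{R}(\tilde{\mathcal{L}})=H^{1}\times\mathbb{R}$. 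Consequently \eqref{eq:PFmatrix} is equivalent to
\[
V=f(V):=V^{*}-\tilde{\mathcal{L}}^{\dagger}\tilde{\LL}_{t}[V^{-1}]P_{L}^{*},\qquad V^{*}=\tilde{\mathcal{L}}^{\dagger}\tilde{I},
\]
and its solution set coincides with the fixed-point set of $f$. Statement 2) is then immediate: if $P_{L}^{*}=\textbf{0}$ then $f(V)\equiv V^{*}$, so $V^{*}$ is the unique solution.

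For statement 1), write $M:=\tilde{\mathcal{L}}^{\dagger}\tilde{\LL}_{t}$ and use the weighted supremum norm $\|x\|_{*}:=\|[V^{*}]^{-1}x\|_{\infty}$, for which $\HH(\delta_{-})=\{V:\|V-V^{*}\|_{*}\le\delta_{-}\}$ is closed, hence complete (here I use $V^{*}>0$, consistent with Assumption \ref{ass:voltage}). From $\frac{1}{V_j}-\frac{1}{V_j'}=\frac{V_j'-V_j}{V_jV_j'}$ one obtains the exact relations $f(V)-f(V')=M[P_{L}^{*}][V^{-1}][V'^{-1}](V-V')$ and $f(V)-V^{*}=-M[V^{-1}]P_{L}^{*}$. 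On $\HH(\delta_{-})$ one has $V_j,V_j'\ge(1-\delta_{-})V_j^{*}>0$, so the induced $\infty$-norm of $[V^{*}]^{-1}M[P_{L}^{*}][V^{-1}][V'^{-1}][V^{*}]$ is at most $\frac{1}{4(1-\delta_{-})^{2}}\|P_{cri}P_{L}^{*}\|_{\infty}=\frac{\Delta}{4(1-\delta_{-})^{2}}=\frac{\delta_{-}}{1-\delta_{-}}<1$, using $\Delta=4\delta_{-}(1-\delta_{-})$ and $\delta_{-}<\frac{1}{2}$; hence $f$ is a contraction on $\HH(\delta_{-})$. The same estimate applied to $f(V)-V^{*}$ gives $\|f(V)-V^{*}\|_{*}\le\frac{1}{1-\delta_{-}}\cdot\frac{\Delta}{4}=\delta_{-}$, i.e. $f(\HH(\delta_{-}))\subseteq\HH(\delta_{-})$. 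Banach's theorem then yields a unique fixed point $V_{-}\in\HH(\delta_{-})$, which is the unique solution of \eqref{eq:PFmatrix} in $\HH(\delta_{-})$.

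To complete statement 1), suppose $\hat{V}\in\II$ solves \eqref{eq:PFmatrix}; then $\hat{V}-V_{-}=M[P_{L}^{*}][\hat{V}^{-1}][V_{-}^{-1}](\hat{V}-V_{-})$. Since $\II$ is open, $\hat{V}>(1-\delta_{+})V^{*}$ forces $c:=\min_{j}\hat{V}_j/V_j^{*}>1-\delta_{+}$ strictly, while $V_{-,j}\ge(1-\delta_{-})V_j^{*}$; repeating the estimate with these two bounds, the relevant induced norm is at most $\frac{\Delta/4}{c(1-\delta_{-})}<\frac{\Delta/4}{(1-\delta_{+})(1-\delta_{-})}=1$, where I used that the two roots of $\Delta=4\delta(1-\delta)$ are symmetric about $\frac{1}{2}$, so $1-\delta_{+}=\delta_{-}$, and $\Delta/4=\delta_{-}(1-\delta_{-})$. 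Hence $\hat{V}=V_{-}\in\HH(\delta_{-})$, contradicting $\hat{V}\notin\HH(\delta_{-})$. For statement 3), the last scalar row of \eqref{eq:PFmatrix} is precisely \eqref{eq:voltagebalancing}, i.e. $\textbf{1}_N^{T}[I^s_t]V=\textbf{1}_N^{T}[I^s_t]V_{ref}$; if $V\in\JJ$ then $V\le(1-\delta_{+})V^{*}<V_{ref}$ componentwise under the hypothesis, so $\textbf{1}_N^{T}[I^s_t]V<\textbf{1}_N^{T}[I^s_t]V_{ref}$ (as $[I^s_t]>0$), a contradiction; thus $\JJ$ contains no solution.

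The main obstacle I anticipate is twofold: establishing that $\tilde{\mathcal{L}}$ has full column rank (so that the pseudoinverse reformulation loses nothing), and the sign bookkeeping needed for the norm estimates to collapse exactly to $\|P_{cri}P_{L}^{*}\|_{\infty}$ rather than to $\||P_{cri}|\,|P_{L}^{*}|\|_{\infty}$. The latter is clean provided $P_{L}^{*}\ge\textbf{0}$ (loads consume power) and $\tilde{\mathcal{L}}^{\dagger}\tilde{\LL}_{t}$ is entrywise nonnegative — a "resistive-network monotonicity" property (drawing more power depresses every voltage) that must be read off from the structure of $\tilde{\mathcal{L}}^{\dagger}$. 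Once full rank and this nonnegativity are in hand, the contraction and comparison steps above are routine; should nonnegativity fail, one would instead carry $\||P_{cri}|\,|P_{L}^{*}|\|_{\infty}<1$ as the standing hypothesis, leaving the rest unchanged.
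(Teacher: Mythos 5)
Your proposal cannot be checked against the paper's own argument because the paper omits the proof entirely (it is deferred to the technical report \cite{NahataConsensusTech}); what can be said is that your route is precisely the one the theorem statement is built around, namely the fixed-point/contraction framework of the cited reference \cite{simpson2016voltage} transplanted to the stacked operator $\tilde{\LL}$, and that the reformulation $V=f(V)$, the contraction and invariance estimates on $\HH(\delta_{-})$, the exclusion of solutions in $\II$ via the sharpened Lipschitz constant together with $\delta_{-}+\delta_{+}=1$, and the exclusion of $\JJ$ via the last scalar row of \eqref{eq:PFmatrix} are all correct. The first obstacle you flag closes cleanly: if $\LL_p v=\textbf{0}_N$ and $\textbf{1}_N^T[I^s_t]v=0$, then $v^T\mathcal{L}_t[I^s_t]^{-1}Y_Lv=v^TY_Lv$ because the rank-one term in $\mathcal{L}_t$ is annihilated by $\textbf{1}_N^T[I^s_t]v=0$; hence $0=v^T\LL_pv=v^T\mathcal{L}_ev+v^TY_Lv$ with both summands nonnegative, connectivity of $\mathcal{G}_e$ forces $v\in\mathrm{span}(\textbf{1}_N)$, and the weight condition then gives $v=\textbf{0}_N$. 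So $\tilde{\LL}$ has full column rank, $\tilde{\LL}^\dagger\tilde{\LL}$ is the identity, and since $\tilde{I}$ and $\tilde{\LL}_t[V^{-1}]P_L^*$ both lie in $H^1\times\mathbb{R}=\mathcal{R}(\tilde{\LL})$, your fixed-point equation is genuinely equivalent to \eqref{eq:PFmatrix}. You should spell this computation out rather than gesture at "transversality."

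The second obstacle is the substantive one, and you have diagnosed it accurately: every norm estimate in the argument produces row sums of $|M_{ij}|P^*_{Lj}/(V_i^*V_j^*)$, i.e.\ the induced $\infty$-norm of the matrix $|P_{cri}|\,[P_L^*]$, whereas \eqref{eq:Criticalpower} is the max-norm of the vector $P_{cri}P_L^*$; the two coincide only if, for each row $i$, the products $(P_{cri})_{ij}P^*_{Lj}$ share a sign. In the reference setting this is automatic because the analogue of $M=\tilde{\LL}^\dagger\tilde{\LL}_t$ is the inverse of an irreducible M-matrix and hence entrywise nonnegative; here $M$ involves the pseudoinverse of a nonsymmetric stacked operator composed with the complete-graph-like matrix $\mathcal{L}_t[I^s_t]^{-1}$, and its entrywise nonnegativity is not self-evident and is not argued anywhere in your proposal. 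Until that monotonicity property is established — or the hypothesis is read as $\||P_{cri}|[P_L^*]\|_{\infty}<1$, your stated fallback, which costs nothing when the property does hold — this is the one step of the proof that remains open. Everything else, including statements 2) and 3), is complete and correct.
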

\begin{proof}
	 For the sake of brevity, the proof is omitted, and can be found in \cite{NahataConsensusTech}.
\end{proof}
\begin{rmk}
	Under the sufficient conditions provided in Theorem \ref{thm:existence}, the existence of an equilibrium point depends upon the critical power matrix $P_{cri}$ and the power absorption $P_L^*$. Clearly, from \eqref{eq:PFmatrix}, the communication network topology $\GG_c$ has no impact on $P_{cri}$. 
\end{rmk}
 Hereafter, in order to be in line with Assumption \ref{ass:voltage}, we assume that $V^*$ is positive for $V_{ref}>0$. 
\subsection{Stability of the DCmG network}
In this section, we aim to study the stability of the closed-loop system \eqref{eq:globalstatespace}, necessary for the DCmG to exhibit the desired steady-state behavior described in Section \ref{sec:eq}. 
\begin{thm}
	\label{thm:stability}
	\textbf{(Stability of the closed-loop DCmG).} Consider the closed-loop system \eqref{eq:globalstatespace} resulting from equations \eqref{eq:basic_consensus}-\eqref{eq:DGUconsensus}, along with Assumption \ref{ass:voltage}. For $i \in \DD$, if the feedback gains $k_{1,i},~k_{2,i}$, and $k_{3,i}$ belong to the set 
	\begin{equation}
	\label{eq:explicitgains}
	\ZZ_{[i]}= \left\{ \begin{array}{l}
	k_{1,i}<1,\\
	k_{2,i}<R_{ti},\\
	0<k_{3,i}<\frac{1}{L_{ti}}(k_{1,i}-1)(k_{2,i}-R_{ti})
	\end{array} \right\},
	\end{equation} $k_{4,i}=k_{1,i}$ and the Z and P components of  \eqref{eq:loaddynamics} verify
	\begin{equation}
	\label{eq:maximumpowerconsumption}
	P^*_{Li}< Y_{Li}\bar{V}_{i}^2,
	\end{equation} then the equilibrium point $\bar{X}$ is locally asymptotically stable, and is globally asymptotically stable when $\bar{P}^*_L=0$.
\end{thm}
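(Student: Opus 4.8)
My plan is to shift the closed-loop system \eqref{eq:globalstatespace} to error coordinates centred at the equilibrium $\bar{X}$ characterised in Lemma~\ref{lem:equilibriumchar}, exhibit a weighted-energy Lyapunov function, and conclude with LaSalle's invariance principle. A first remark is that $\textbf{1}_N^T\mathcal{L}_c=\textbf{0}_N^T$, so $\textbf{1}_N^T\Omega$ is conserved along \eqref{eq:basic_consensus}; I would therefore fix, without loss of generality, the representative $\bar{\Omega}$ of Lemma~\ref{lem:equilibriumchar} with $\textbf{1}_N^T\bar{\Omega}=\textbf{1}_N^T\Omega(0)$, so that $\bar{X}$ is the \emph{unique} equilibrium on the invariant set $\{X:\textbf{1}_N^T\Omega=\textbf{1}_N^T\Omega(0)\}$ and the error $\tilde{\Omega}:=\Omega-\bar{\Omega}$ lies in $H^1$. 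Writing $\tilde{V}=V-\bar{V}$, etc., the only nonlinearity — the $P$-component of \eqref{eq:loaddynamics} — enters the error dynamics through the incremental conductance $g_i(V_i)=Y_{Li}-P^*_{Li}/(V_i\bar{V}_i)$; by \eqref{eq:maximumpowerconsumption} and Assumption~\ref{ass:voltage}, $g_i(\bar{V}_i)=Y_{Li}-P^*_{Li}/\bar{V}_i^2>0$, hence $g_i(V_i)>0$ on a neighbourhood of $\bar{V}_i$, and $g_i\equiv Y_{Li}>0$ globally when $\bar{P}^*_L=0$ — in which case \eqref{eq:globalstatespace} is affine.

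For the Lyapunov function I would take a weighted-energy candidate
\[
\mathcal{W}=\tfrac12\tilde{V}^T C_t\tilde{V}+\tfrac12\tilde{I}^T L\tilde{I}+\sum_{i\in\DD}\tfrac12\,\zeta_i^T P_i\,\zeta_i+\Psi(\tilde{\Omega},\tilde{I}_t),
\]
where $\zeta_i:=(\tilde{V}_i,\tilde{I}_{ti},\tilde{v}_i)^T$, the per-DGU weights $P_i=P_i^T\succ 0$ have $(1,1)$-entry equal to $C_{ti}$, and $\Psi$ is a quadratic form in $(\tilde{\Omega},\tilde{I}_t)$ still to be designed. Differentiating along \eqref{eq:DGUconsensus}, \eqref{eq:basic_consensus}, \eqref{eq:powerline}, two structural cancellations should take place: (i) the line interconnection ($B$ against $B^T$) drops out because $\tilde{V}$ is weighted by $C_t$ and $\tilde{I}$ by $L$, i.e.\ by the stored electrical energy — the familiar passivity of the $RL$-line network; and (ii) the consensus interconnection — the $[I^s_{t}]^{-1}\mathcal{L}_c\tilde{\Omega}$ terms appearing in the $\tilde{I}_t$- and $\tilde{v}$-equations of \eqref{eq:DGUconsensus} against $\dot{\tilde{\Omega}}=\mathcal{L}_c[I^s_{t}]^{-1}\tilde{I}_t$ — is made harmless \emph{precisely} because $k_{4,i}=k_{1,i}$, equivalently $\delta_i L_{ti}=k_{1,i}=\alpha_i L_{ti}+1$ in view of \eqref{eq:abg} and \eqref{eq:delta}; this identity is what lets the $\tilde{\Omega}$-dependent cross terms telescope against $\dot{\Psi}$.

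After these cancellations, $\dot{\mathcal{W}}$ should collapse to $-\tilde{I}^T R\tilde{I}$ plus a sum of per-DGU quadratic forms $-\zeta_i^T Q_i\zeta_i$ with leading entry $-g_i(V_i)<0$; requiring $Q_i\succ 0$ amounts to a structured Lyapunov inequality for the local DGU matrix $\bigl[\begin{smallmatrix}-g_i/C_{ti}&1/C_{ti}&0\\\alpha_i&\beta_i&\gamma_i\\-1&0&0\end{smallmatrix}\bigr]$, whose characteristic polynomial is $s^3+(g_i/C_{ti}-\beta_i)s^2-(g_i\beta_i+\alpha_i)s/C_{ti}+\gamma_i/C_{ti}$, and a direct Routh--Hurwitz check shows this cubic is Hurwitz whenever $k_{1,i},k_{2,i},k_{3,i}$ lie in $\ZZ_{[i]}$ of \eqref{eq:explicitgains} and $g_i>0$. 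Then $\dot{\mathcal{W}}\le 0$, with equality only on $\{\tilde{V}=\tilde{I}_t=\tilde{v}=\tilde{I}=0\}$; on the largest invariant subset of this set one has $\tilde{v}_i\equiv 0$, hence $0=\dot{\tilde{v}}_i=-\tilde{\omega}_i$ with $\tilde{\omega}:=[I^s_{t}]^{-1}\mathcal{L}_c\tilde{\Omega}$, which forces $\mathcal{L}_c\tilde{\Omega}=0$ and, since $\tilde{\Omega}\in H^1$, $\tilde{\Omega}=0$. LaSalle's theorem then yields local asymptotic stability of $\bar{X}$; when $\bar{P}^*_L=0$ the dynamics are affine, $g_i\equiv Y_{Li}$, and $\mathcal{W}$ is a global, radially unbounded Lyapunov function, which gives global asymptotic stability.

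The step I expect to be the real obstacle is the construction in the second paragraph: producing, in a fully \emph{decentralized} fashion, weights $P_i$ and a form $\Psi$ that simultaneously (a) carry the $C_t$-compatible structure needed for the line cancellation, (b) satisfy the compatibility identities forced by $k_{4,i}=k_{1,i}$ so that the consensus cross terms vanish, and (c) keep each residual $Q_i$ positive definite for gains in $\ZZ_{[i]}$ — equivalently, showing that the corresponding per-DGU LMI is feasible exactly on $\ZZ_{[i]}$. A secondary point, peculiar to the local claim, is that $g_i$ is state-dependent, so one must restrict to a sublevel set of $\mathcal{W}$ on which $g_i(V_i)>0$ and Assumption~\ref{ass:voltage} remain valid; the sets isolated in Theorem~\ref{thm:existence} are the natural candidates for such an invariant estimate of the region of attraction.
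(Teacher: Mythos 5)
The paper does not include a proof of Theorem~\ref{thm:stability} (it is deferred to the technical report \cite{NahataConsensusTech}), so a line-by-line comparison is not possible; your proposal can only be judged on its own terms. Several of your preparatory observations are correct and checkable: $\textbf{1}_N^T\Omega$ is indeed conserved, so the equilibrium is unique on the invariant affine subspace and $\tilde{\Omega}\in H^1$; the incremental conductance $g_i(V_i)=Y_{Li}-P^*_{Li}/(V_i\bar{V}_i)$ is positive near $\bar{V}_i$ under \eqref{eq:maximumpowerconsumption}; and your Routh--Hurwitz computation for $s^3+(g_i/C_{ti}-\beta_i)s^2-(g_i\beta_i+\alpha_i)s/C_{ti}+\gamma_i/C_{ti}$ does go through on $\ZZ_{[i]}$ (all three coefficients are positive and $a_2a_1-a_0>0$ follows from $\gamma_i<\alpha_i\beta_i$). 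The LaSalle endgame ($\tilde v\equiv 0\Rightarrow\mathcal{L}_c\tilde\Omega=0\Rightarrow\tilde\Omega=0$ on $H^1$) is also sound \emph{provided} the dissipation inequality you posit actually holds.

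The genuine gap is that the dissipation inequality is asserted, not established, and your own ansatz for $\mathcal{W}$ contains a structural tension that makes its feasibility non-obvious. For the line interconnection to cancel ($-\tilde V^TB\tilde I+\tilde I^TB^T\tilde V=0$), the dependence of $\mathcal{W}$ on $\tilde V_i$ must be \emph{exactly} the decoupled quadratic $\tfrac12 C_{ti}\tilde V_i^2$: any cross entry $P_{i,(1,2)}$ or $P_{i,(1,3)}$ injects uncancelled terms of the form $\tilde I_{ti}(B\tilde I)_i$ into $\dot{\mathcal{W}}$. (As written, your $\mathcal{W}$ also double-counts this quadratic, since it appears both in $\tfrac12\tilde V^TC_t\tilde V$ and in $\zeta_i^TP_i\zeta_i$, which would already spoil the cancellation by a factor of two.) This forces $P_i=\mathrm{diag}(C_{ti},0,0)+\bigl[\begin{smallmatrix}0&0\\0&\Pi_i\end{smallmatrix}\bigr]$ with only the $2\times2$ block $\Pi_i$ free, and whether $-(P_iA_i+A_i^TP_i)\succ0$ is then achievable for \emph{all} gains in $\ZZ_{[i]}$ is precisely the structured Lyapunov inequality you defer; Hurwitzness of $A_i$ alone does not guarantee a solution with a prescribed first row and column. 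The consensus cancellation adds further constraints on the same $\Pi_i$: the $\tilde I_{ti}\tilde\omega_i$ cross terms can only telescope against a $\dot\Psi$ built from $\dot{\tilde\Omega}=\mathcal{L}_c[I^s_t]^{-1}\tilde I_t$ if $(\Pi_{i,(1,1)}\delta_i-\Pi_{i,(1,2)})$ is the \emph{same} constant for every DGU (because $\mathcal{L}_c$ does not commute with generic diagonal matrices), and the $\tilde v_i\tilde\omega_i$ terms have no counterpart in $\dot{\tilde\Omega}$ at all, so they must vanish identically, forcing $\Pi_{i,(1,2)}\delta_i=\Pi_{i,(2,2)}$. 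Reconciling these algebraic identities with $\Pi_i\succ0$ and $Q_i\succ0$ simultaneously, in a decentralized way, \emph{is} the proof; until that system of constraints is shown feasible on all of $\ZZ_{[i]}$ with $k_{4,i}=k_{1,i}$, the argument remains a plausible programme rather than a proof.
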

\begin{proof}
	The proof is provided in \cite{NahataConsensusTech} and is skipped due to space constraints.
\end{proof}
\begin{rmk}\textbf{(Compatibility with primary control and behavior under communication collapse).} Equations \eqref{eq:powerline} and \eqref{eq:DGUdynamicsupdated} represent the DCmG under primary control when the secondary layer is inactive. As shown in \cite{Nahata}, \eqref{eq:explicitgains} and \eqref{eq:maximumpowerconsumption} also make it possible to design stabilizing primary controllers. This enables us to reach the following conclusions: (i) the design of the proposed secondary controllers is fully compatible with the primary layer, and solely requires setting an additional control gain $k_{4,i}=k_{1,i}\in \ZZ_{[i]}$ once activated; (ii) if the DCmG undergoes a communication collapse, the primary controllers maintain voltage stability without any human intervention and force each PC to track $V_{ref,i}$ in steady state. 
\end{rmk}

\section{Simulation results}
\label{sec:simulations}

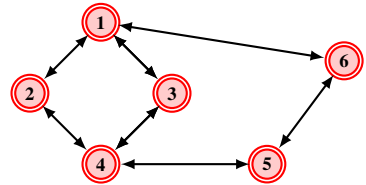
\begin{figure}[h!]
	\centering
	\begin{tikzpicture}[scale=0.7,transform shape,->,>=stealth',shorten >=1pt,auto,node distance=1.25cm, thick,main node/.style={circle, draw, double, fill=red!20, draw=red}]
	
	\node[main node] (1) {\textbf{1}};
	\node[main node] (2) [below left= of 1] {\textbf{2}};
	\node[main node] (3) [below right= of 1] {\textbf{3}};
	\node[main node] (4) [below right= of 2] {\textbf{4}};
	\node[main node] (5) [right=2.5 of 4] {\textbf{5}};
	\node[main node] (6) [above right=1.5 and of 5] {\textbf{6}};
	
	\path[latex-latex,every node/.style={font=\sffamily\small}]
	(1) edge node [left] {} (3)
	(1) edge node [left] {} (2)
	(3) edge node [left] {} (4)
	(2) edge node [left] {} (4)
	(1) edge node [left] {} (3)
	(3) edge node [left] {} (4)
	(5) edge node [right] {} (4);
	
	\path [latex-latex, thick] 
	(1) edge (6)
	(6) edge (5);
	\vspace{-2cm}
	\end{tikzpicture}
	\caption{Simplified DCmG composed of 6 DGUs.}
	\label{fig:5areasplug}
\end{figure}
In this section, we aim to demonstrate the capability of the proposed control scheme to guarantee current sharing and voltage balancing. We consider a meshed DCmG composed of 6 DGUs (see Figure \ref{fig:5areasplug}) with non-identical electrical parameters, adopted from \cite{Tucci2016independent}. In our simulations, we assume ZIP loads with powers $P^{*}_{Li}$, $i = 1, \dots, 6$ always fulfilling  \eqref{eq:maximumpowerconsumption}. The PC voltage references $V_{ref,i}$, $i = 1, \dots, 6$ are set close to 50V and to slightly different values. We highlight that, the control gains $k_{1,i},~k_{2,i}$, and $k_{3,i}$ belong to the set $\mathcal{Z}_{[i]}$ defined in \eqref{eq:explicitgains}. For the considered network, the variable $\Delta$ is much smaller than 1, guaranteeing the existence of a voltage solution to \eqref{eq:PFmatrix}.  In the following discussion, we evaluate voltage balancing and current sharing in the DCmG when DGUs are plugged-in and loads are arbitrarily changed.

\textit{Plug-in of all DGUs:} At time $t<0$, all the DGUs are isolated and only the primary voltage regulators, designed as in \cite{Nahata}, are active tracking $V_{ref,i}$ at their respective PCCs. At time $t=0$, all the DGUs are connected to from the DCmG shown in Figure \ref{fig:5areasplug} and the secondary control layer is activated. The control gain $k_{4,i}$ is set equal to $k_{1,i}$ for all DGUs, and no other control gain is modified. As shown in Figure \ref{fig:consensuscurrents}, the DGU currents are shared proportional to their capacity. This is achieved by automatically adjusting the reference voltages at PC (see Fig \ref{fig:consensusvoltages}). Note that the voltages $V^{max}$ and $V^{min}$ represent the maximum and minimum elements of $(1-\delta_{-})V^*$ and $(1+\delta_{-})V^*$, respectively. Although not shown, each PC voltage is bounded as in \eqref{eq:existenceset}. Moreover, Figure \ref{fig:consensusregulations} shows that weighted voltage balancing is achieved in steady state.

\textit{Robustness to changes in load:} At $t=2$s, an increase takes place in load consumption. Indeed, as in Figure \ref{fig:secctrl}, the filter currents attain a new steady state and share the total load proportionally. In addition, the PC voltages are bounded and converge to a new steady state. We highlight that, since the voltage references are left unchanged, the weighted voltage sum remains the same.
\begin{figure}[h]
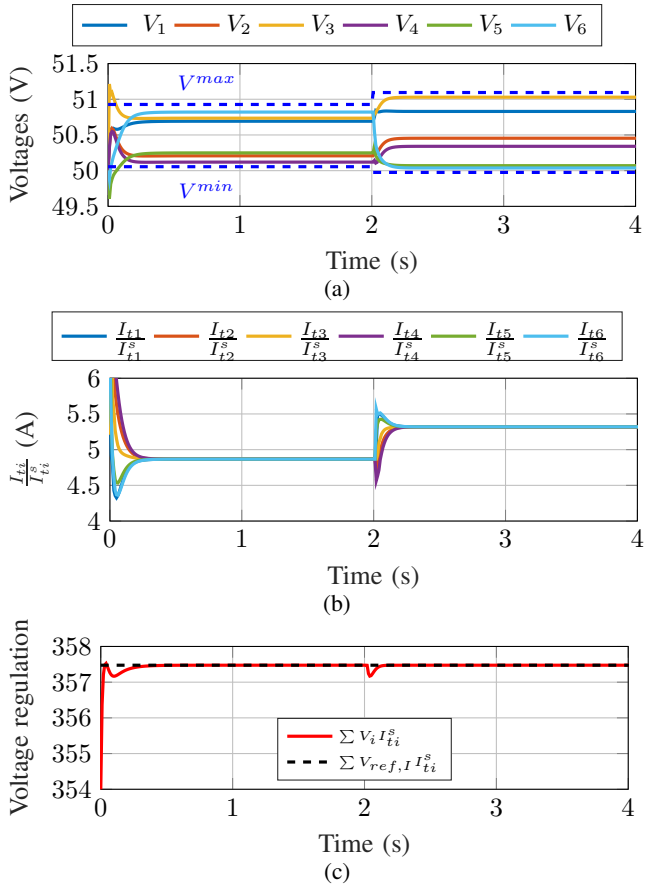

	\definecolor{mycolor1}{rgb}{0.00000,0.44700,0.74100}%
	\definecolor{mycolor2}{rgb}{0.85000,0.32500,0.09800}%
	\definecolor{mycolor3}{rgb}{0.92900,0.69400,0.12500}%
	\definecolor{mycolor4}{rgb}{0.49400,0.18400,0.55600}%
	\definecolor{mycolor5}{rgb}{0.46600,0.67400,0.18800}%
	\definecolor{mycolor6}{rgb}{0.30100,0.74500,0.93300}%
	\setlength\fheight{1.9cm} 
	\setlength\fwidth{0.4\textwidth}
	\centering
	\ref{voltages}\\
	\begin{subfigure}[b]{0.3\textwidth}
		\hspace{-1.8cm}\input{voltage.tex}
		\vspace{-0.7cm}
		\caption{}
		\label{fig:consensusvoltages}
		\vspace{0.1cm}
	\end{subfigure}
	\ref{currents}\\
		\setlength\fheight{1.9cm} 
		\setlength\fwidth{0.4\textwidth}
	\begin{subfigure}[b]{0.3\textwidth}
	\hspace{-1.8cm}\input{currentsharing.tex}
	\vspace{-0.7cm}
	\caption{}
	\label{fig:consensuscurrents}
	\end{subfigure}
	~ 
	\begin{subfigure}[b]{0.3\textwidth}
		\vspace{-0.1cm}
		\hspace{-1.8cm}
%
%
\begin{tikzpicture}

\begin{axis}[%
width=0.986\fwidth,
height=\fheight,
at={(0\fwidth,0\fheight)},
scale only axis,
xmin=0,
xmax=4,
xtick={0, 1, 2, 3, 4},
xlabel style={font=\color{white!15!black}},
xlabel={Time (s)},
ymin=354,
ymax=358,
ylabel style={font=\color{white!15!black}},
ylabel={Voltage regulation},
axis background/.style={fill=white},
xmajorgrids,
ymajorgrids,
legend style={legend cell align=left, align=left, draw=white!15!black, at={(0.5,0.5)}, anchor=north}
]
\addplot [color=red, line width=1.2pt]
  table[row sep=crcr]{%
0	354.000000003047\\
0.00999999999999091	356.36910782681\\
0.0199999999999818	357.212518504969\\
0.0299999999999727	357.485928645298\\
0.0400000000000205	357.52466567456\\
0.0500000000000114	357.438174177689\\
0.0600000000000023	357.342024639896\\
0.0699999999999932	357.257909551977\\
0.0799999999999841	357.20273291357\\
0.089999999999975	357.173417109106\\
0.100000000000023	357.165493787811\\
0.110000000000014	357.172366509928\\
0.120000000000005	357.188622363512\\
0.129999999999995	357.209950719084\\
0.160000000000025	357.279411748988\\
0.180000000000007	357.319273730644\\
0.189999999999998	357.336341929906\\
0.199999999999989	357.351577129928\\
0.20999999999998	357.365125010995\\
0.220000000000027	357.377152479767\\
0.240000000000009	357.397303127664\\
0.259999999999991	357.41321154833\\
0.279999999999973	357.425814951708\\
0.300000000000011	357.435829877965\\
0.319999999999993	357.443800982118\\
0.339999999999975	357.450149514456\\
0.370000000000005	357.457334755134\\
0.399999999999977	357.462442344673\\
0.439999999999998	357.467034873134\\
0.490000000000009	357.470490828762\\
0.550000000000011	357.472722033652\\
0.639999999999986	357.474182845075\\
0.810000000000002	357.474879185358\\
1.55000000000001	357.474999243215\\
2.00999999999999	357.474999583035\\
2.01999999999998	357.471952111853\\
2.02999999999997	357.238697059099\\
2.04000000000002	357.170084842488\\
2.05000000000001	357.183561834401\\
2.06	357.231298998845\\
2.06999999999999	357.286957818351\\
2.07999999999998	357.337113766586\\
2.08999999999997	357.378186040195\\
2.10000000000002	357.409161574217\\
2.11000000000001	357.431488284423\\
2.12	357.44686280067\\
2.13	357.457109392231\\
2.13999999999999	357.463718090151\\
2.14999999999998	357.467862074902\\
2.16000000000003	357.470398370614\\
2.18000000000001	357.472812213148\\
2.20999999999998	357.473860357587\\
2.32999999999998	357.474701381611\\
2.69	357.47499435637\\
4.00999999999999	357.474999376299\\
};
\addlegendentry{\tiny $\sum{V_iI^{s}_{ti}}$}

\addplot [color=black, dashed, line width=1.2pt]
  table[row sep=crcr]{%
0	357.475\\
4.00999999999999	357.475\\
};
\addlegendentry{\tiny $\sum{V_{ref,I}I^{s}_{ti}}$}

\end{axis}

\begin{axis}[%
width=1.272\fwidth,
height=1.272\fheight,
at={(-0.165\fwidth,-0.177\fheight)},
scale only axis,
xmin=0,
xmax=1,
ymin=0,
ymax=1,
axis line style={draw=none},
ticks=none,
axis x line*=bottom,
axis y line*=left,
legend style={legend cell align=left, align=left, draw=white!15!black}
]
\end{axis}
\end{tikzpicture}%
		\vspace{-0.7cm}
		\caption{}
		\label{fig:consensusregulations}
	\end{subfigure}
	\caption{PCC voltages, weighted filter currents, and weighted voltage sum under secondary control. \vspace{-0.2cm}}\label{fig:secctrl}
\end{figure}
With the intention of demonstrating that a steady-state voltage may not exist under secondary control due to the presence of P loads, we increase the line resistances multi-fold to violate \eqref{eq:Criticalpower}. In such a scenario, the PC voltages do not reach a steady state; see Figure \ref{fig:voltagecollapse}. In particular, the voltage $V_6$ approaches 0, leading to an increasing power absorption by the P load at PC 6, and eventually to a simulation failure. 
\begin{figure}[!h]
	\vspace{-0.1cm}
	\definecolor{mycolor1}{rgb}{0.00000,0.44700,0.74100}%
	\definecolor{mycolor2}{rgb}{0.85000,0.32500,0.09800}%
	\definecolor{mycolor3}{rgb}{0.92900,0.69400,0.12500}%
	\definecolor{mycolor4}{rgb}{0.49400,0.18400,0.55600}%
	\definecolor{mycolor5}{rgb}{0.46600,0.67400,0.18800}%
	\definecolor{mycolor6}{rgb}{0.30100,0.74500,0.93300}%
	\setlength\fheight{1.9cm} 
	\setlength\fwidth{0.4\textwidth}
	\centering
	{\hspace{-.3cm}\input{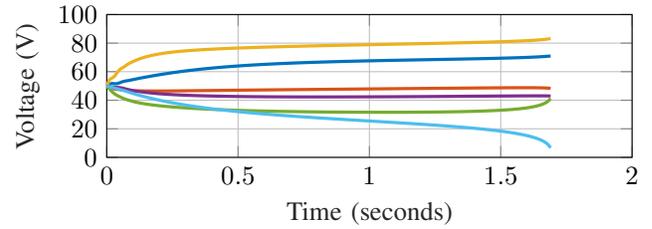}}\\
	\caption{Voltage collapse in the DCmG network due to nonexistence of a steady state.
	\vspace{-0.5cm}}
	\label{fig:voltagecollapse}
\end{figure}

\section{CONCLUSIONS}
\label{sec:conclusions}
In this paper, a novel secondary consensus-based control layer for current sharing and voltage balancing in DCmGs was presented. We considered a DCmG composed of realistic DGUs, RLC lines, and ZIP loads. A rigorous steady-state analysis was conducted, and appropriate conditions were derived to ensure the attainment of both objectives. In addition, a voltage stability analysis was provided showing that the controllers can be synthesized in a decentralized fashion.
Future developments will study the impact of non-idealities (such as transmission delays, data quantization and packet drops) on the performance of closed-loop mGs. Further developments can also consider the inclusion of Boost and other DC-DC converters.
                                  
\bibliographystyle{IEEEtran}
\bibliography{articleref}
\end{document}